\documentclass[pra,twocolumn]{revtex4-1}

\usepackage{amsmath,amssymb,amsthm,easybmat}
\usepackage{verbatim}
\usepackage{enumerate}
\usepackage{color}
\usepackage{bm}
\usepackage{graphicx}
\usepackage{longtable}
\usepackage{hyperref}
\usepackage{array}
\newcolumntype{x}[1]{%
>{\centering\arraybackslash}p{#1}}%

\newtheorem{proposition}{Proposition}
\newtheorem{theorem}{Theorem}

\newcommand{\ket}[1]{|#1\rangle}



\newcommand{\mc}[1]{\mathcal{#1}}
\newcommand{\mbf}[1]{\mathbf{#1}}
\begin{document}
\title{ Bell Inequalities with Communication Assistance}

\author{Katherine Maxwell $^1$}
\email{kam4756@truman.edu}
\author{Eric Chitambar $^2$}
\email{echitamb@siu.edu}

\affiliation{$^1$ Truman State University, Kirksville, Missouri 63501, USA\\
$^2$ Department of Physics and Astronomy{\mbox ,} Southern Illinois University, 
Carbondale, Illinois 62901, USA}

\date{\today}

\begin{abstract}
In this paper we consider the possible correlations between two parties using local machines and shared randomness with an additional amount of classical communication.  This is a continuation of the work initiated by Bacon and Toner in Ref. [\textit{Phys. Rev. Lett.} \textbf{90}, 157904 (2003)] who characterized the correlation polytope for $2\times 2$ measurement settings with binary outcomes plus one bit of communication.  Here, we derive a complete set of Bell Inequalities for $3\times 2$ measurement settings and a shared bit of communication.  When the communication direction is fixed, nine Bell Inequalities characterize the correlation polytope, whereas when the communication direction is bi-directional, 143 inequalities describe the correlations.  We then prove a tight lower bound on the amount of communication needed to simulate all no-signaling correlations for a given number of measurement settings. 
\end{abstract}
\maketitle

\section{Introduction}

Bell Inequalities provide one way to draw a boundary between the quantum and classical regimes.  While they do not tell the whole ``quantum versus classical'' story, Bell Inequalities nevertheless allow us to definitively certify the existence of certain non-classical phenomena and reflect on their philosophical implications.  

To understand exactly why Bell Inequalities are such a fundamental concept in physics, it is perhaps easiest to consider a theoretical scenario involving two distant parties called Alice and Bob.  Two ``black boxes'' labeled $\mc{M}_A$ and $\mc{M}_B$ are distributed to Alice and Bob respectively (see Fig. \ref{Fig:Single_Box}).  No assumptions are made about how these boxes are built or what physical devices are inside.  The only known properties of these boxes is that $\mc{M}_A$ accepts an input number $i$ chosen from the set $\{0,1,...,M_A-1\}$ and outputs a number $a$ from the set $\{0,1,...,K_A-1\}$.  We think of $M_A$ as the number of measurement settings that Alice has for her device and $K_A$ as the number of measurement outcomes.  Bob's box behaves analogously.  What interests us are the outputs $a$ and $b$ that Alice and Bob obtain, respectively, given a certain choice of inputs $i$ and $j$.  In general this involves a probabilistic transition governed by the conditional probabilities $p(ab|ij)$, and each complete set of such $p(ab|ij)$ form a $K_AK_B\times M_AM_B$ stochastic matrix that describes the \textit{correlations} between Alice and Bob's boxes.  

\begin{figure}[b]
\includegraphics[scale=0.6]{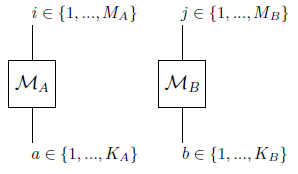}
\caption{\label{Fig:Single_Box} Alice and Bob have respective black boxes $\mc{M}_A$ and $\mc{M}_B$.  The input/output correlations are given by the conditional probabilities $p(ab|ij)$.} 
\end{figure} 

We now ask what correlations are possible given certain physical restrictions on the boxes $\mc{M}_A$ and $\mc{M}_B$.  Models consistent with classical physics consist of local boxes with shared randomness (LSR), which means that the probabilities $p(ab|ij)$ can be decomposed as 
\begin{equation}
\label{Eq:LSR}
p(ab|ij)=\sum_\lambda p(\lambda)q_A(a|i\lambda)q_B(b|j\lambda).
\end{equation}  
Here, $\lambda$ is some variable shared between Alice and Bob according to distribution $p(\lambda)$, and $q_A(a|i\lambda)$ (resp. $q_B(b|j\lambda)$) are conditional distributions that give a complete local description for the operation of $\mc{M}_A$ (resp. $\mc{M}_B$).  The essence of Bell's original paper \cite{Bell-1964a}, and further refined by Clauser, Holt, Shimony and Horne (CHSH) \cite{Clauser-1969a}, is that the correlations of any LSR boxes must satisfy certain inequalities that quantum boxes can break; hence quantum mechanics is able to generate nonlocal correlations.  By ``quantum boxes,'' we envision $\mc{M}_A$ and $\mc{M}_B$ as being two quantum systems prepared in some joint entangled state $\ket{\Psi}_{AB}$.  For $M_A=M_B=K_A=K_B=2$, there is essentially only one non-trivial inequality for LSR boxes which is appropriately referred to as the CHSH inequality \cite{Fine-1982a}.

Given that quantum boxes are more powerful than classical boxes, a natural question is what additional resources must be added to a classical model so that it can simulate quantum correlations.
From the practical perspective of experimentally simulating the correlations, one can ask \textit{how many bits of classical communication {\upshape (CC)} combined with local shared randomness are sufficient to reproduce quantum correlations} \cite{Maudlin-1992a, Brassard-1999a, Steiner-2000a}?  Bacon and Toner introduced the notion of ``Bell Inequalities with auxiliary communication'' which are generalized CHSH inequalities that identify all the correlations consistent with LSR and a stipulated amount of CC \cite{Bacon-2003a}.  They showed that, not surprisingly, for $M_A=M_B=K_A=K_B=2$, only one bit of communication is sufficient to replicate any possible quantum correlation.  This is because one bit of CC is capable of generating any set of statistics that is consistent with relativistic causation.  Assuming Alice and Bob's boxes to be spacelike separated, special relativity stipulates that Alice or Bob's choice of input cannot affect the output statistics of the other, a condition known as \textit{no-signaling} and characterized by: 
\begin{align}
\label{Eq:No-Signal-Alice}
q_A(a|i)&:=\sum_{b=0}^{K_B-1}p(ab|ij)\quad\forall a,i,j\\
\label{Eq:No-Signal-Bob}
q_B(b|j)&:=\sum_{a=0}^{K_A-1}p(ab|ij)\quad\forall b,i,j.
\end{align}
As quantum mechanics respects the no-signaling principle, 1 bit of CC suffices to simulate quantum statistics for $M_A=M_B=K_A=K_B=2$.  For larger number of measurement settings little is currently known about the amount of communication needed to simulate quantum correlations.  

The main contribution of this paper is a presentation of all the Bell Inequalities for $M_A=3$ and $M_B=K_A=K_B=2$ when assistance is provided by one bit of classical communication.  Note that when Bacon and Toner consider the $M_A=3$ case in Ref. \cite{Bacon-2003a}, they only compute expectation inequalities for observables with $\pm 1$ spectrum.  Computing the allowed probabilities $p(ab|ij)$ is a more general and complicated problem.  In fact, once beyond two measurement settings, the complexity of the CC-assisted problem increases quite dramatically, as our computations below will demonstrate.  Thus, any small step forward in understanding these correlations is valuable.  We give an overview of general CC-assisted LSR boxes in \ref{Sect:CC}.  We then present the Bell Inequalities in Section \ref{Sect:results} and prove lower bounds on the classical communication cost to simulate no-signaling correlations in Section \ref{Sect:LowerBounds}.

\section{A General CC-Assisted Framework}
\label{Sect:CC}

In this section we describe a general framework for using a limited amount of classical communication to generate correlations under a local shared randomness model.  Since we are interested in isolating the power of one CC bit on its own, we consider two different models: \textit{fixed-direction} communication and \textit{bi-directional} communication (see Fig. \ref{Fig:CC_Boxes}).  The former describes a scenario where Alice and Bob can only send CC in a certain direction whereas this directional restriction is removed in the latter.  Before describing these models in more detail, we first review the notion of locally equivalent correlations.

\begin{figure}[b]
\includegraphics[scale=0.5]{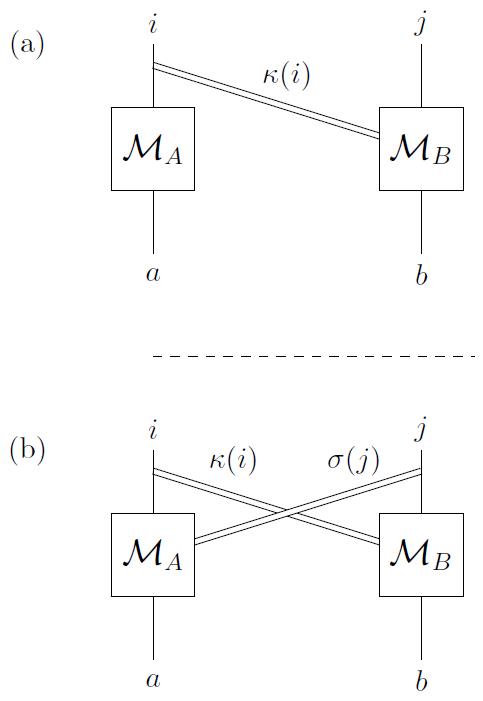}
\caption{\label{Fig:CC_Boxes} (a) An Alice $\to$ Bob fixed-direction communication scheme in which $\kappa(i)$ is sent from Alice to Bob. (b)  A bi-directional communication scheme sending data in both directions.  Note that model (b) contains both Alice $\to$ Bob and Bob $\to$ Alice fixed-direction communication schemes.} 
\end{figure} 

\subsection{Local Operations}

Even though two correlations $p(ab|ij)$ and $p'(ab|ij)$ may differ, they might be equivalent in their nonlocal content.  This would be the case if, for instance, $p'(ab|ij)=p(ab|\phi_A(i)j)$ $\forall a,b,i,j$, where $\pi_A:\{0,1,...,M_A-1\}\leftrightarrow\{0,1,...,M_A-1\}$ is some local permutation of Alice.  Clearly the correlations do not become any more or less nonlocal when Alice performs this permutation.  More generally, we say that one correlation matrix $p(ab|ij)$ is locally equivalent to another $p'(ab|ij)$ if (i) they are are related by an uncorrelated permutation of inputs: $p'(ab|ij)=p(ab|\pi_A(i)\pi_B(j))$, or (ii) they are related by an uncorrelated conditional permutation of outputs: $p'(ab|ij)=p(\pi_A^{(i)}(a)\pi_B^{(j)}(b)|ij)$, where $\pi_A^{(i)}$ is a particular choice of permutation for Alice depending on her input $i$, and likewise for $\pi_B^{(j)}$.

Based on Fig. \ref{Fig:CC_Boxes} and the CC-assisted model, it may seem that Alice/Bob may be able to relabel his/her output conditioned on the $\kappa(i)/\sigma(j)$ received from the other party.  In principle this is true.  However, it will be unnecessary to consider this dependence explicitly as long as we consider all possible encodings $\kappa(i)/\sigma(j)$ and all possible local maps of Alice and Bob; a conditional relabeling can be seen as just another encoding/decoding strategy.  This will be made more clear when we discuss the computation strategy below.

\subsection{Fixed-Direction Communication}

Without the classical communication, the correlations of a general LSR model satisfy Eq. \eqref{Eq:LSR}.  Suppose now that Alice is allowed to send $r$ bits of CC to Bob.  In general this can be represented by some $\lambda$-dependent function $\kappa_\lambda(i)$ that acts on Alice's input value $i$.   Hence, the resulting correlations satisfy $p(ab|ij)=\sum_\lambda p(\lambda)q_A(a|i\lambda)q_B(b|j\kappa_\lambda(i)\lambda)$.  We assume that each $\kappa_\lambda$ is a deterministic mapping since any unshared randomness of the functions $q_A(\cdot|i\lambda)$, $q_B(\cdot|j\kappa(i)\lambda)$, and $\kappa_\lambda(i)$ can be absorbed into the globally shared random variable $\lambda$.  Consequently, $p(ab|ij)$ can be expressed as a convex sum of deterministic strategies characterized by functions $\kappa(i)\in\{0,1,...,2^{r}-1\}$, $q_A(a|i)\in\{0,1,...,K_A-1\}$, and $q_B(b|j\kappa(i))\in\{0,1,...,K_B-1\}$ for all $i,j,a,b$.  As there are only a finite number of deterministic strategies, the correlations $p(ab|ij)$ generated by such strategies form the vertices of a convex polytope.  Following the standard procedure, we enumerate all polytope vertices and then generate their convex hull.

Let us restrict to $K_A=K_B=2$.  The correlation polytope will then have dimension $4M_AM_B$, but normalization can be enforced to eliminate probabilities of the form $p(11|ij)$ and reduce the dimension to $M_AM_B(K_AK_B-1)=3M_AM_B$.  The dimension can be reduced even further by noting that the no-signaling condition must hold from Bob to Alice.  This means that probabilities are restricted by Eq. \eqref{Eq:No-Signal-Alice}.  As a result, for each fixed $i$, we can eliminate $p(01|ij)$ in favor of $q_A(0|i)$, and in total the polytope will have dimension $M_A(2M_B+1)$ with the independent variables being $\{p(00|ij),p(10|ij),q_A(0|i)\}_{i=0;j=0}^{M_A-1;M_B-1}$.  To satisfy the LSR + CC constraint, the polytope vertices must satisfy
\begin{align}
\label{Eq:ProbPost-select}
p(00|ij)&=q_A(0|i)q_B(0|j\kappa(i))\notag\\
p(10|ij)&=[1-q_A(0|i)]q_B(0|j\kappa(i)).
\end{align}

We thus generate all the vertices by considering all possible functions $\kappa(i)$, $q_A(0|i)$, $q_B(0|j\kappa(i))$ and then forming the probabilities according to Eq. \eqref{Eq:ProbPost-select}.  The process can be simplified a bit since not all possibilities of $\kappa$ need to be considered.  First, it is the particular groupings $\kappa^{-1}(l)$ of $\{0,1,...,M_A-1\}$ that matter and not the values of $\kappa(i)$ themselves.  This is because we consider all possible mappings $q_B(0|jl)$ with $j\in\{0,1,...,M_B-1\}$ and $l\in\{0,1,...,2^r-1\}$.  Second, we can assume that $\kappa$ is surjective since any non-surjective map will generate probabilities that are also generated by a subjective one.  Thus, the total number of communication functions needing to be counted is equivalent to the number of ways $M_A$ elements can be grouped into exactly $2^r$ equivalent classes.  This is given by Stirling's Number of the Second Kind \cite{Abramowitz-2012a}: $\genfrac\{\}{0pt}{}{M_A}{2^r}$.

\subsection{Bi-Direction Communication}

Without the directional restriction, the $r$ bits can be split between Alice and Bob so that Alice sends $s$ bits and Bob sends $r-s$.  In general, the value of $s$ can depend on some variable $\lambda$, and we denote this dependence by $s_\lambda$.  The communication scheme can then be modeled by functions $\kappa_\lambda(i) \in \{0,1,\ldots,2^{s_\lambda}-1\}$ and $\sigma_\lambda(j) \in \{0,1,\ldots,2^{r-s_\lambda}-1\}$. The resulting correlations satisfy $p(ab|ij)=\sum_\lambda p(\lambda)q_A(a|\sigma_\lambda(j)i\lambda)q_B(b|j\kappa_\lambda(i)\lambda)$.  Like before, $p(ab|ij)$ can be expressed as a convex sum of deterministic strategies where $q_A(a|\sigma(j)i)\in\{0,1,...,K_A-1\}$ and $q_B(b|j\kappa(i))\in\{0,1,...,K_B-1\}$ for all $i,j,a,b$.

Restricting to $K_A=K_B=2$, the normalization constraint reduces the resulting polytope to dimension $3M_AM_B$. This will be the final dimension of the polytope since the no-signaling condition does hold in either direction. Therefore the independent variables are $\{p(00|ij),p(10|ij),q(01|ij)\}_{i=0;j=0}^{M_A-1;M_B-1}$.  To satisfy the LSR + CC constraint, the polytope vertices must satisfy
\begin{align}
\label{Eq:ProbPost-select2}
p(00|ij)&=q_A(0|i\sigma(j))q_B(0|j\kappa(i))\notag\\
p(10|ij)&=[1-q_A(0|i\sigma(j))]q_B(0|j\kappa(i))\notag\\
p(01|ij)&=q_A(0|i\sigma(j))[1-q_B(0|j\kappa(i))].
\end{align}

We generate the vertices by considering all possible functions $\kappa(i)$, $\sigma(j)$, $q_A(0|i\sigma(j))$, $q_B(0|j\kappa(i))$ for all values of $0<s\leq r$. For $r=1$, we have $s \in \{0,1\}$ corresponding respectively to Bob sending Alice one bit and Alice sending Bob one bit. Thus in this case, the vertices of the random-direction polytope will be those of the fixed-direction polytope with communication from Alice to Bob and the fixed-direction polytope with communication from Bob to Alice.

\subsection{Computational Procedure}

Enumerating all the vertices can easily be accomplished using computer software such as MATLAB.  What interests us are the corresponding facet inequalities of the polytope, which represent the ``Bell Inequalities'' for the particular model.  Converting the vertex characterization into the facet characterization of a polytope is known as the \textit{hull problem}, and in general it is an NP-Complete problem \cite{Pitowsky-1991a}.  The computational task is made even more laborious due to the exponential growth in the communication complexity resulting from the asymptotic behavior of $\genfrac\{\}{0pt}{}{M_A}{2^r}$.  

For $M_A=3$, $M_B=2$, and $r=1$, we were able to complete the calculation.  After using MATLAB to enumerate all the polytope vertices, the main tool used was the freely available linear optimization program called PORTA \footnote{\url{http://typo.zib.de/opt-long_projects/Software/Porta/}} that enabled us to switch between polytope representations. Lastly, we wrote a Python program to convert between locally equivalent inequalities and remove those equivalent inequalities.

\section{Facet Inequalities for $M_A=3$, $M_B=2$}

\label{Sect:results}

\subsection{Fixed-Direction Communication}

We now list the Bell Inequalities for the Alice $\to$ Bob fixed-direction communication model.  In what follows, we will represent inequalities by giving the coefficients of the 15 free variables.  The coefficients will be arranged in a chart as: 
\[
\begin{BMAT}{|ccc|}{c|cc|cc}
q_A(0|0)   & q_A(0|1) & q_A(0|2) \\
p(00|00) & p(00|10) & p(00|20) \\
p(00|01) & p(00|11) & p(00|21) \\
p(10|00) & p(10|10) & p(10|20) \\
p(10|01) & p(10|11) & p(10|21)
\end{BMAT}\leq \Gamma.
\]
Here, we multiply whatever numbers appear in the chart by the corresponding probability, and the total sum must be less than $\Gamma$.  For instance, the box
\[
\begin{BMAT}{|ccc|}{c|cc|cc}
-1& 0  & 0 \\
0 & -2 & 3 \\
0 & 4  & 0 \\
0 & 0 & 5 \\
-6 & 0 & 0
\end{BMAT}\leq 7.
\]
means that
\begin{align}
-q_A(0|0)-2p(00|10) + 3p(00|20)+ 4p(00|11)&\notag\\
+5p(10|20)-6p(10|01)&\leq 7.\notag
\end{align}

In addition to the non-negativity constraint and the condition that $q_A(0|i)\geq p(00|ij)$, we obtain the following eight facet inequalities for the case Alice sends 1 bit of CC to Bob.

\begin{align*}
\textbf{I.}\quad &\begin{BMAT}{|ccc|}{c|cc|cc}
-1& 0  & 0 \\
0 & -1 & 1 \\
0 & 1  & 0 \\
-1 & 0 & 1 \\
-1 & 0 & 0
\end{BMAT}\leq 1.
&\textbf{II.}\quad
\begin{BMAT}{|ccc|}{c|cc|cc}
0  & 0  & 0 \\
-1 & -1 & 1 \\
-1 &  1 & 0 \\
-1 & 0  & 1 \\
-1 & 0  & 0
\end{BMAT} \leq 1.
\end{align*}

\begin{align*}
\textbf{III.}\quad &
\begin{BMAT}{|ccc|}{c|cc|cc}
-1  & -1 & 0 \\
0   & 1 & -1 \\
0  & 1 & 1 \\
-1 & 0 & 1 \\
-1 & 0 & -1
\end{BMAT} \leq 1.
&\textbf{IV.}\quad
\begin{BMAT}{|ccc|}{c|cc|cc}
0  & 0  & 0 \\
-1 & -1 & 1 \\
-1 &  1 & 0 \\
-1 & -1 & 1 \\
-1 & 1  & 0
\end{BMAT} \leq 1.
\end{align*}

\begin{align*}
\textbf{V.}\quad &
\begin{BMAT}{|ccc|}{c|cc|cc}
-1 & 0 & 0 \\
1 & -1 & -1 \\
1 & -1 &  1 \\
0 & -1 & 1 \\
0 & -1 & -1
\end{BMAT} \leq 1.
&\textbf{VI.}\quad
\begin{BMAT}{|ccc|}{c|cc|cc}
-3   & 0 & 0 \\
2 & -2 & 0 \\
2 & 1 & -1 \\
-1 & -1 & 1\\
-1 & 0  & -2
\end{BMAT} \leq 1.
\end{align*}

\begin{align*}
\textbf{VII.}\quad &
\begin{BMAT}{|ccc|}{c|cc|cc}
-3   & 0 & 0 \\
2 & -2 & 0 \\
2 & 1 & -1 \\
-1 & -2 & 1\\
-1 & 1  & -2
\end{BMAT} \leq 1.
&\textbf{VIII.}\quad
\begin{BMAT}{|ccc|}{c|cc|cc}
-3   & 0 & 0 \\
2 & -2 & 1 \\
2 & 1 & -2 \\
-1 & -2 & 1\\
-1 & 1  & -2
\end{BMAT} \leq 1.
\end{align*}

On the other hand, if the fixed communication direction is Bob $\to$ Alice, then the allowed correlations are precisely those that satisfy the no-signaling from Alice to Bob (i.e. Eq. \eqref{Eq:No-Signal-Alice}).  A protocol for simulating any such $p(ab|ij)$ is reviewed in Sect. \ref{Sect:LowerBounds}.

\subsection{Bi-directional Communication}

We have computed a total of 143 inequivalent facet inequalities for bi-directional CC-assisted correlations.  These are too numerous to present here, and we just make a few simple remarks.  First, both the Alice $\to$ Bob and Bob $\to$ Alice fixed-direction polytopes are contained in the bi-directional polytope.  On the other hand, there exist certain distributions that violate at least one of the 143 inequalities.  For instance, like the distribution given in Ref. \cite{Bacon-2003a}, any distribution of the form $p_{ab|ij}=a(j)b(i)$ lies outside the random-direction polytope, where both $a$ and $b$ are non-constant functions of $j$ and $i$ respectively.  Intuitively, these distributions correspond to the scenario in which both Alice and Bob's output depend on the other's input.  This cannot be simulated with only one bit of communication shared between the duo.   

\section{Communication Cost for Simulating No-Signaling}
\label{Sect:LowerBounds}

In this section we quantify the CC cost for simulating a general no-signaling correlation.  Specifically, we show that the simulation protocol given by Bacon and Toner is optimal.  For $M_A\geq M_B$, their protocol uses $\lceil\log_2 M_B\rceil$ bits of CC and is given as follows \cite{Bacon-2003a}.  For each $j\in\{0,1,...,M_B-1\}$, Alice and Bob share a random variable ranging over $\{0,1,...,K_B-1\}$ with distribution given by $\{q_B(b|j)\}_{b=0}^{K_B-1}$.  Bob sends his input $j$ to Alice and they both consult their share correlations to obtain output $b$.  Then for each of Alice's input $i\in\{0,1,...,M_A-1\}$, she samples from $\{0,1,...,K_A-1\}$ with distribution $\{p(ab|ij)\}_{a=0}^{K_A-1}$ and outputs $a$.

We now show that this protocol is, in fact, optimal.
\begin{theorem}
\label{Thm:main}
Consider boxes with $M_A/K_A$ inputs/outputs for Alice and $M_B/K_B$ inputs/outputs for Bob, with $M_A\geq M_B$.  Then simulating all no-signaling correlations with local shared randomness requires at least $\lceil\log_2 M_B\rceil$ bits of CC.
\end{theorem}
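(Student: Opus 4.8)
The plan is to exhibit a single no-signaling correlation that no $r$-bit protocol can reproduce when $r<\lceil\log_2 M_B\rceil$; since a bi-directional $r$-bit model contains both fixed-direction $r$-bit models, it is enough to rule this out for the bi-directional model. First I would fix the witness. Using $K_A=K_B=2$ (legitimate because $K_A,K_B\ge 2$), let $P^*(ab|ij)=\tfrac12$ when $a\oplus b=[\,i\ge j\,]$ and $P^*(ab|ij)=0$ otherwise (and $0$ whenever $a$ or $b$ exceeds $1$). A one-line computation shows $\sum_b P^*(ab|ij)=\tfrac12$ for $a\in\{0,1\}$ and $\sum_a P^*(ab|ij)=\tfrac12$ for $b\in\{0,1\}$, so $P^*$ is no-signaling. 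Its associated sign pattern is the $M_A\times M_B$ matrix $E(i,j):=(-1)^{[\,i\ge j\,]}$ of $\pm1$ entries.

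Next I would establish a rigidity property of any simulation. Suppose $P^*=\sum_\lambda p(\lambda)\,p_\lambda$ is an LSR $+\,r$-bit bi-directional simulation, each $p_\lambda$ a deterministic strategy built from $\kappa_\lambda,\sigma_\lambda,A_\lambda,B_\lambda$ as in Eq.~\eqref{Eq:ProbPost-select2}, with $\kappa_\lambda$ taking at most $2^{s_\lambda}$ values and $\sigma_\lambda$ at most $2^{r-s_\lambda}$ values, $0\le s_\lambda\le r$. Since $P^*(ab|ij)=0$ for every $(a,b)$ with $a\oplus b\ne[\,i\ge j\,]$ (and for $a$ or $b\ge 2$), each $p_\lambda(ab|ij)$ with $\lambda$ in the support must vanish there; hence $A_\lambda,B_\lambda\in\{0,1\}$ and $A_\lambda(i,\sigma_\lambda(j))\oplus B_\lambda(j,\kappa_\lambda(i))=[\,i\ge j\,]$ for all $i,j$, i.e.\ $E(i,j)=(-1)^{A_\lambda(i,\sigma_\lambda(j))}(-1)^{B_\lambda(j,\kappa_\lambda(i))}$. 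Fixing one such $\lambda$, the level sets of $\kappa_\lambda$ partition the rows into $P\le 2^{s_\lambda}$ classes and those of $\sigma_\lambda$ partition the columns into $Q\le 2^{r-s_\lambda}$ classes; on each block of this $P\times Q$ grid the displayed identity exhibits $E$ as an outer product of two $\pm1$ vectors, so that block has rank at most $1$.

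The last step is a rank argument. Writing $E$ as the sum of its $PQ$ grid blocks (each padded with zeros) and using subadditivity of rank gives $\operatorname{rank}(E)\le PQ\le 2^{s_\lambda}2^{r-s_\lambda}=2^{r}$. On the other hand the top-left $M_B\times M_B$ submatrix of $E$ is $J-2L$ with $L$ the lower-triangular all-ones matrix; since $L$ is invertible with $\det L=1$ and $L^{-1}\mathbf 1=e_0$, one gets $\det(J-2L)=(-2)^{M_B}\det\!\big(I-\tfrac12 e_0\mathbf 1^{\!\top}\big)=(-2)^{M_B}/2\ne0$, so that submatrix is nonsingular and $\operatorname{rank}(E)=M_B$. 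Hence $2^{r}\ge M_B$, i.e.\ $r\ge\lceil\log_2 M_B\rceil$, matching the cost of the Bacon--Toner protocol. The substantive move is this last paragraph: a cheap simulation is forced to split the sign matrix of the correlation into only $2^r$ rank-one blocks, so choosing the no-signaling witness to have full-rank sign matrix blocks the simulation. Everything else is routine; the points requiring care are justifying that every deterministic strategy in the support reproduces $E$ \emph{exactly} (the rigidity step) and verifying that the chosen $P^*$ is genuinely no-signaling and its sign matrix genuinely has rank $M_B$ for every $M_B$.
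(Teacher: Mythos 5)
Your proof is correct, and it reaches the bound by a genuinely different (though closely related) route from the paper's. The paper also reduces to deterministic strategies and exploits the fact that any strategy in the support must reproduce the zero pattern of the witness exactly, but its witness is the ``identity-like'' extremal no-signaling point of Jones and Masanes (anti-correlated only on the diagonal cells $i=j$ with $1\le i\le M_B-1$, correlated elsewhere), and its contradiction is local: a pigeonhole proposition produces three indices $t_0,t_1,t_2\in\{0,\dots,M_B-1\}$ with $\kappa(t_0)=\kappa(t_2)$ and $\sigma(t_1)=\sigma(t_2)$, so the four cells $(i,j)$ with $i\in\{t_0,t_2\}$, $j\in\{t_1,t_2\}$ lie in a single communication block, where three ``correlated'' constraints and one ``anti-correlated'' constraint cannot all be met by a product form $q_A(a|i\sigma(j))q_B(b|j\kappa(i))$ --- in effect a $2\times 2$ sign submatrix of rank $2$ sitting inside a block that must be a rank-one sign outer product. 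Your argument globalizes exactly this mechanism: you observe that the whole sign matrix splits into at most $2^{s_\lambda}\cdot 2^{r-s_\lambda}=2^r$ rank-one blocks and then choose a witness (the triangular pattern $E(i,j)=(-1)^{[i\ge j]}$) whose sign matrix has full rank $M_B$, so that rank subadditivity gives $2^r\ge M_B$ directly. What the paper's version buys is elementarity (no linear algebra, just a parity obstruction on four probabilities) and the observation that the witness is an extreme point of the no-signaling polytope; what yours buys is a cleaner, more quantitative statement --- the communication cost of any XOR-type no-signaling correlation is at least $\log_2$ of the rank of its sign matrix --- which makes the appearance of $\lceil\log_2 M_B\rceil$ transparent and would generalize to other witnesses. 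The two points you flagged as needing care (exact rigidity of the support, nonsingularity of $J-2L$ via $\det(J-2L)=(-2)^{M_B}/2$) are both handled correctly.
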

\begin{proof}
For a given $M_A$ and $M_B$, we will construct a non-signaling correlation, $\hat{p}(ab|ij)$, that cannot be simulated with fewer than $\lceil\log_2 M_B\rceil$ bits of CC.  In this distribution, only outputs $0$ and $1$ have nonzero probabilities.  For $i=0$, outputs 0 and 1 are perfectly correlated: $\hat{p}(00|ij)=\hat{p}(11|ij)=1/2$.  For $0< i\leq M_B-1$, the distributions are (i) perfectly correlated for outputs 0 and 1 whenever $i\not=j$, and perfectly anti-correlated for outputs 0 and 1 whenever $i=j$: $\hat{p}(01|ii)=\hat{p}(10|ii)=1/2$.  For $i\geq M_B$, the distributions are deterministic for Alice: $\hat{p}(00|ij)=\hat{p}(01|ij)=1/2$.  It can easily be verified that these correlations satisfy Eqns. \eqref{Eq:No-Signal-Alice} and \eqref{Eq:No-Signal-Bob}.  In fact, this is an extreme point of the binary output no-signaling polytope, as proven by Jones and Masanes \cite{Jones-2005a}.

Suppose that $r=\lceil\log_2 M_B\rceil-1$ bits of CC suffice to simulate the given distribution.  As described in Sect. \ref{Sect:CC}, a general CC-assisted strategy can be decomposed into a convex combination of deterministic strategies.  Let $\mathcal{S}$ be a variable for the various strategies, each occurring with probability $p(\mathcal{S})$.  The total correlations are then given by $\hat{p}(ab|ij)=\sum_{\mathcal{S}}p({\mathcal{S}})p(ab|ij\mathcal{S})$.  For any particular strategy $\overline{\mathcal{S}}$, Alice sends Bob $\kappa(i)\in\{0,1,\ldots,2^{s}-1\}$ and Bob sends Alice $\sigma(j)\in\{0,1,\ldots,2^{r-s}-1\}$.   
\begin{proposition}
There exists three distinct values $t_0,t_1,t_2\in\{0,1,...,M_B-1\}$ such that $\kappa(t_0)=\kappa(t_2)$ and $\sigma(t_1)=\sigma(t_2)$.
\end{proposition}
\begin{proof}
For $x\in\{0,1,\ldots,2^{s}-1\}$, let $n_{max}=\max_x|\kappa^{-1}(x)|$ and take $x_0$ such that $|\kappa^{-1}(x_0)|=n_{max}$.  We have $n_{max}\geq\frac{M_B}{2^s}>2^{r-s}$.  If $r=s$, choose $t_0,t_2\in\kappa^{-1}(x_0)$ and $t_1$ to be any other nonnegative number $\leq M_B-1$.  If $r>s$, since $\sigma$ takes on $2^{r-s}$ different values, it follows that there must exist at least one distinct pair $t_1,t_2\in\kappa^{-1}(x_0)$ such that $\sigma(t_1)=\sigma(t_2)$.  The cardinality bound $|\kappa^{-1}(x_0)|>2$ means a third distinct $t_0$ can be found in $\kappa^{-1}(x_0)$.
\end{proof}
Let $t_0,t_1,t_2$ be three values described in the proposition, and consider the four sets of conditional probabilities $p(ab|i j\overline{\mathcal{S}})$, with $i\in\{t_0,t_2\}$ and $j\in\{t_1,t_2\}$.  The equality $\hat{p}(ab|ij)=\sum_{\mathcal{S}}p(\mathcal{S})p(ab|ij\mathcal{S})$ implies that $p(ab|ij\overline{\mathcal{S}})$ is zero whenever $\hat{p}(ab|ij)$ is zero.  Hence, 
\begin{align}
\label{Eq:GenProbs}
0&=p(01|t_0 t_1\overline{\mathcal{S}})=p(10|t_0 t_1\overline{\mathcal{S}})\notag\\
0&=p(01|t_0 t_2\overline{\mathcal{S}})=p(10|t_0 t_2\overline{\mathcal{S}})\notag\\
0&=p(01|t_2 t_1\overline{\mathcal{S}})=p(10|t_2 t_1\overline{\mathcal{S}})\notag\\
0&=p(00|t_2 t_2\overline{\mathcal{S}})=p(11|t_2 t_2\overline{\mathcal{S}}).
\end{align}
The probabilities satisfy 
\[p(ab|ij\overline{\mathcal{S}})=q_A(a|i\sigma(j)\overline{\mathcal{S}})q_B(b|j\kappa(i)\overline{\mathcal{S}})\]
 with $1=\sum_{a=0}^1 q_A(a|i\sigma(j)\overline{\mathcal{S}})=\sum_{b=0}^1q_B(b|j\kappa(i)\overline{\mathcal{S}})$.  By combining this with the equalities $\kappa(t_0)=\kappa(t_2)$ and $\sigma(t_1)=\sigma(t_2)$, it is straightforward to see that Eqns. \eqref{Eq:GenProbs} cannot be simultaneously satisfied.
\end{proof}

\section{Conclusion}

\label{Sect:conclusion}

In this paper we have made partial progress in understanding how classical communication functions as a resource in generating non-local correlations.  We have completely characterized the correlation polytope for $M_A=3$ and $M_B=K_A=K_B=2$.  This is done for both fixed-direction and bi-directional communication.  We then proved optimality in communication cost for simulating no-signaling correlations; regardless of the number of measurement outcomes, at least $\min\{\lceil\log_2 M_A\rceil,\lceil\log_2 M_B\rceil\}$ bits of CC are required to produce all no-signaling correlations.

Concerning the question of simulating quantum correlations, one bit of CC is sufficient so long as $\min\{M_A,M_B\}=2$.  Hence, all quantum strategies will satisfy the inequalities presented in Section \ref{Sect:results}.  The next obvious scenario to consider is $M_A=M_B=3$ and $K_A=K_B=2$.  Here, extensive work has been conducted to understand the local \cite{Pitowski-2001a, Collins-2004a}, quantum \cite{Navascues-2007a, Pal-2010a}, and more general non-local correlations \cite{Brunner-2005a}.  While our Theorem \ref{Thm:main} shows that one bit of CC is insufficient for simulating all non-signaling correlations, it is an important problem to understand whether all quantum correlations can nevertheless be simulated.

\begin{acknowledgments}
We would like to thank Benjamin Fortescue and Min-Hsiu Hsieh for helpful discussions during work on this project.  K.M. was supported under the Research Experience for Undergraduates (REU) grant NSF DMR 1157058.
\end{acknowledgments}

\bibliography{QuantumBib}

\begin{thebibliography}{16}%
\makeatletter
\providecommand \@ifxundefined [1]{%
 \@ifx{#1\undefined}
}%
\providecommand \@ifnum [1]{%
 \ifnum #1\expandafter \@firstoftwo
 \else \expandafter \@secondoftwo
 \fi
}%
\providecommand \@ifx [1]{%
 \ifx #1\expandafter \@firstoftwo
 \else \expandafter \@secondoftwo
 \fi
}%
\providecommand \natexlab [1]{#1}%
\providecommand \enquote  [1]{``#1''}%
\providecommand \bibnamefont  [1]{#1}%
\providecommand \bibfnamefont [1]{#1}%
\providecommand \citenamefont [1]{#1}%
\providecommand \href@noop [0]{\@secondoftwo}%
\providecommand \href [0]{\begingroup \@sanitize@url \@href}%
\providecommand \@href[1]{\@@startlink{#1}\@@href}%
\providecommand \@@href[1]{\endgroup#1\@@endlink}%
\providecommand \@sanitize@url [0]{\catcode `\\12\catcode `\$12\catcode
  `\&12\catcode `\#12\catcode `\^12\catcode `\_12\catcode `\%12\relax}%
\providecommand \@@startlink[1]{}%
\providecommand \@@endlink[0]{}%
\providecommand \url  [0]{\begingroup\@sanitize@url \@url }%
\providecommand \@url [1]{\endgroup\@href {#1}{\urlprefix }}%
\providecommand \urlprefix  [0]{URL }%
\providecommand \Eprint [0]{\href }%
\providecommand \doibase [0]{http://dx.doi.org/}%
\providecommand \selectlanguage [0]{\@gobble}%
\providecommand \bibinfo  [0]{\@secondoftwo}%
\providecommand \bibfield  [0]{\@secondoftwo}%
\providecommand \translation [1]{[#1]}%
\providecommand \BibitemOpen [0]{}%
\providecommand \bibitemStop [0]{}%
\providecommand \bibitemNoStop [0]{.\EOS\space}%
\providecommand \EOS [0]{\spacefactor3000\relax}%
\providecommand \BibitemShut  [1]{\csname bibitem#1\endcsname}%
\let\auto@bib@innerbib\@empty
\bibitem [{\citenamefont {Bell}(1964)}]{Bell-1964a}%
  \BibitemOpen
  \bibfield  {author} {\bibinfo {author} {\bibfnamefont {J.~S.}\ \bibnamefont
  {Bell}},\ }\href@noop {} {\bibfield  {journal} {\bibinfo  {journal}
  {Physics}\ }\textbf {\bibinfo {volume} {1}},\ \bibinfo {pages} {195}
  (\bibinfo {year} {1964})}\BibitemShut {NoStop}%
\bibitem [{\citenamefont {Clauser}\ \emph {et~al.}(1969)\citenamefont
  {Clauser}, \citenamefont {Horne}, \citenamefont {Shimony},\ and\
  \citenamefont {Holt}}]{Clauser-1969a}%
  \BibitemOpen
  \bibfield  {author} {\bibinfo {author} {\bibfnamefont {J.~F.}\ \bibnamefont
  {Clauser}}, \bibinfo {author} {\bibfnamefont {M.~A.}\ \bibnamefont {Horne}},
  \bibinfo {author} {\bibfnamefont {A.}~\bibnamefont {Shimony}}, \ and\
  \bibinfo {author} {\bibfnamefont {R.~A.}\ \bibnamefont {Holt}},\ }\href
  {\doibase 10.1103/PhysRevLett.23.880} {\bibfield  {journal} {\bibinfo
  {journal} {Phys. Rev. Lett.}\ }\textbf {\bibinfo {volume} {23}},\ \bibinfo
  {pages} {880} (\bibinfo {year} {1969})}\BibitemShut {NoStop}%
\bibitem [{\citenamefont {Fine}(1982)}]{Fine-1982a}%
  \BibitemOpen
  \bibfield  {author} {\bibinfo {author} {\bibfnamefont {A.}~\bibnamefont
  {Fine}},\ }\href {\doibase 10.1103/PhysRevLett.48.291} {\bibfield  {journal}
  {\bibinfo  {journal} {Phys. Rev. Lett.}\ }\textbf {\bibinfo {volume} {48}},\
  \bibinfo {pages} {291} (\bibinfo {year} {1982})}\BibitemShut {NoStop}%
\bibitem [{\citenamefont {Maudlin}(1992)}]{Maudlin-1992a}%
  \BibitemOpen
  \bibfield  {author} {\bibinfo {author} {\bibfnamefont {T.}~\bibnamefont
  {Maudlin}},\ }\href@noop {} {\bibfield  {journal} {\bibinfo  {journal} {PSA:
  Proceedings of the Biennial Meeting of the Philosophy of Science
  Association}\ }\textbf {\bibinfo {volume} {1992}},\ \bibinfo {pages} {pp.
  404} (\bibinfo {year} {1992})}\BibitemShut {NoStop}%
\bibitem [{\citenamefont {Brassard}\ \emph {et~al.}(1999)\citenamefont
  {Brassard}, \citenamefont {Cleve},\ and\ \citenamefont
  {Tapp}}]{Brassard-1999a}%
  \BibitemOpen
  \bibfield  {author} {\bibinfo {author} {\bibfnamefont {G.}~\bibnamefont
  {Brassard}}, \bibinfo {author} {\bibfnamefont {R.}~\bibnamefont {Cleve}}, \
  and\ \bibinfo {author} {\bibfnamefont {A.}~\bibnamefont {Tapp}},\ }\href
  {\doibase 10.1103/PhysRevLett.83.1874} {\bibfield  {journal} {\bibinfo
  {journal} {Phys. Rev. Lett.}\ }\textbf {\bibinfo {volume} {83}},\ \bibinfo
  {pages} {1874} (\bibinfo {year} {1999})}\BibitemShut {NoStop}%
\bibitem [{\citenamefont {Steiner}(2000)}]{Steiner-2000a}%
  \BibitemOpen
  \bibfield  {author} {\bibinfo {author} {\bibfnamefont {M.}~\bibnamefont
  {Steiner}},\ }\href {\doibase
  http://dx.doi.org/10.1016/S0375-9601(00)00315-7} {\bibfield  {journal}
  {\bibinfo  {journal} {Physics Letters A}\ }\textbf {\bibinfo {volume}
  {270}},\ \bibinfo {pages} {239 } (\bibinfo {year} {2000})}\BibitemShut
  {NoStop}%
\bibitem [{\citenamefont {Bacon}\ and\ \citenamefont
  {Toner}(2003)}]{Bacon-2003a}%
  \BibitemOpen
  \bibfield  {author} {\bibinfo {author} {\bibfnamefont {D.}~\bibnamefont
  {Bacon}}\ and\ \bibinfo {author} {\bibfnamefont {B.~F.}\ \bibnamefont
  {Toner}},\ }\href {\doibase 10.1103/PhysRevLett.90.157904} {\bibfield
  {journal} {\bibinfo  {journal} {Phys. Rev. Lett.}\ }\textbf {\bibinfo
  {volume} {90}},\ \bibinfo {pages} {157904} (\bibinfo {year}
  {2003})}\BibitemShut {NoStop}%
\bibitem [{\citenamefont {Abramowitz}\ and\ \citenamefont
  {Stegun}(2012)}]{Abramowitz-2012a}%
  \BibitemOpen
  \bibfield  {author} {\bibinfo {author} {\bibfnamefont {M.}~\bibnamefont
  {Abramowitz}}\ and\ \bibinfo {author} {\bibfnamefont {I.}~\bibnamefont
  {Stegun}},\ }\href@noop {} {\emph {\bibinfo {title} {Handbook of Mathematical
  Functions: with Formulas, Graphs, and Mathematical Tables}}}\ (\bibinfo
  {publisher} {Dover Publications},\ \bibinfo {year} {2012})\BibitemShut
  {NoStop}%
\bibitem [{\citenamefont {Pitowsky}(1991)}]{Pitowsky-1991a}%
  \BibitemOpen
  \bibfield  {author} {\bibinfo {author} {\bibfnamefont {I.}~\bibnamefont
  {Pitowsky}},\ }\href {\doibase 10.1007/BF01594946} {\bibfield  {journal}
  {\bibinfo  {journal} {Mathematical Programming}\ }\textbf {\bibinfo {volume}
  {50}},\ \bibinfo {pages} {395} (\bibinfo {year} {1991})}\BibitemShut
  {NoStop}%
\bibitem [{Note1()}]{Note1}%
  \BibitemOpen
  \bibinfo {note} {\protect \url
  {http://typo.zib.de/opt-long_projects/Software/Porta/}}\BibitemShut {NoStop}%
\bibitem [{\citenamefont {Jones}\ and\ \citenamefont
  {Masanes}(2005)}]{Jones-2005a}%
  \BibitemOpen
  \bibfield  {author} {\bibinfo {author} {\bibfnamefont {N.~S.}\ \bibnamefont
  {Jones}}\ and\ \bibinfo {author} {\bibfnamefont {L.}~\bibnamefont
  {Masanes}},\ }\href {\doibase 10.1103/PhysRevA.72.052312} {\bibfield
  {journal} {\bibinfo  {journal} {Phys. Rev. A}\ }\textbf {\bibinfo {volume}
  {72}},\ \bibinfo {pages} {052312} (\bibinfo {year} {2005})}\BibitemShut
  {NoStop}%
\bibitem [{\citenamefont {Pitowsky}\ and\ \citenamefont
  {Svozil}(2001)}]{Pitowski-2001a}%
  \BibitemOpen
  \bibfield  {author} {\bibinfo {author} {\bibfnamefont {I.}~\bibnamefont
  {Pitowsky}}\ and\ \bibinfo {author} {\bibfnamefont {K.}~\bibnamefont
  {Svozil}},\ }\href {\doibase 10.1103/PhysRevA.64.014102} {\bibfield
  {journal} {\bibinfo  {journal} {Phys. Rev. A}\ }\textbf {\bibinfo {volume}
  {64}},\ \bibinfo {pages} {014102} (\bibinfo {year} {2001})}\BibitemShut
  {NoStop}%
\bibitem [{\citenamefont {Collins}\ and\ \citenamefont
  {Gisin}(2004)}]{Collins-2004a}%
  \BibitemOpen
  \bibfield  {author} {\bibinfo {author} {\bibfnamefont {D.}~\bibnamefont
  {Collins}}\ and\ \bibinfo {author} {\bibfnamefont {N.}~\bibnamefont
  {Gisin}},\ }\href {\doibase 10.1088/0305-4470/37/5/021} {\bibfield  {journal}
  {\bibinfo  {journal} {Journal of Physics A: Mathematical and General}\
  }\textbf {\bibinfo {volume} {37}},\ \bibinfo {pages} {1775} (\bibinfo {year}
  {2004})}\BibitemShut {NoStop}%
\bibitem [{\citenamefont {Navascu\'es}\ \emph {et~al.}(2007)\citenamefont
  {Navascu\'es}, \citenamefont {Pironio},\ and\ \citenamefont
  {Ac\'\i{}n}}]{Navascues-2007a}%
  \BibitemOpen
  \bibfield  {author} {\bibinfo {author} {\bibfnamefont {M.}~\bibnamefont
  {Navascu\'es}}, \bibinfo {author} {\bibfnamefont {S.}~\bibnamefont
  {Pironio}}, \ and\ \bibinfo {author} {\bibfnamefont {A.}~\bibnamefont
  {Ac\'\i{}n}},\ }\href {\doibase 10.1103/PhysRevLett.98.010401} {\bibfield
  {journal} {\bibinfo  {journal} {Phys. Rev. Lett.}\ }\textbf {\bibinfo
  {volume} {98}},\ \bibinfo {pages} {010401} (\bibinfo {year}
  {2007})}\BibitemShut {NoStop}%
\bibitem [{\citenamefont {P\'al}\ and\ \citenamefont
  {V\'ertesi}(2010)}]{Pal-2010a}%
  \BibitemOpen
  \bibfield  {author} {\bibinfo {author} {\bibfnamefont {K.~F.}\ \bibnamefont
  {P\'al}}\ and\ \bibinfo {author} {\bibfnamefont {T.}~\bibnamefont
  {V\'ertesi}},\ }\href {\doibase 10.1103/PhysRevA.82.022116} {\bibfield
  {journal} {\bibinfo  {journal} {Phys. Rev. A}\ }\textbf {\bibinfo {volume}
  {82}},\ \bibinfo {pages} {022116} (\bibinfo {year} {2010})}\BibitemShut
  {NoStop}%
\bibitem [{\citenamefont {Brunner}\ \emph {et~al.}(2005)\citenamefont
  {Brunner}, \citenamefont {Gisin},\ and\ \citenamefont
  {Scarani}}]{Brunner-2005a}%
  \BibitemOpen
  \bibfield  {author} {\bibinfo {author} {\bibfnamefont {N.}~\bibnamefont
  {Brunner}}, \bibinfo {author} {\bibfnamefont {N.}~\bibnamefont {Gisin}}, \
  and\ \bibinfo {author} {\bibfnamefont {V.}~\bibnamefont {Scarani}},\ }\href
  {\doibase 10.1088/1367-2630/7/1/088} {\bibfield  {journal} {\bibinfo
  {journal} {New Journal of Physics}\ }\textbf {\bibinfo {volume} {7}},\
  \bibinfo {pages} {88} (\bibinfo {year} {2005})}\BibitemShut {NoStop}%
\end{thebibliography}%

\end{document}